\newcommand{\eat}[1]{}
\newtheorem{mydef}{\textbf{DEFINITION}}
\newtheorem{mytheo}{\textbf{THEOREM}}
\newtheorem{mylemma}{\textbf{LEMMA}}
\newlength{\figwidth}
\newlength{\figthree}
\newlength{\figfour}
\newlength{\figfours}
\begin{document}

\title{DataGrinder: Fast, Accurate, Fully non-Parametric Classification Approach Using 2D Convex Hulls}

\author{
Mohammad Khabbaz\\
       \affaddr{Personal Business Development}\\
       \email{mohammmad@gmail.com}
}

\maketitle

\begin{abstract}
It has been a long time, since data mining technologies have made their ways to the field of data management.
Classification is one of the most important data mining tasks for label prediction, categorization
of objects into groups, advertisement and data management. In this paper, we focus on the standard
classification problem which is predicting unknown labels in Euclidean space. Most efforts in Machine
Learning communities are devoted to methods that use probabilistic algorithms which are heavy on
Calculus and Linear Algebra. Most of these techniques have scalability issues for big data, and are hardly parallelizable
if they are to maintain their high accuracies in their standard form. Sampling is a new direction for improving scalability, using many small parallel classifiers. In this paper,
rather than conventional sampling methods, we focus on a discrete classification algorithm with $O(n)$ expected running time. Our approach performs a similar task as sampling methods. However, we use column-wise sampling of
data, rather than the row-wise sampling used in the literature. In either case, our algorithm is completely
deterministic. Our algorithm, proposes a way of combining 2D convex hulls in order to achieve high classification
accuracy as well as scalability in the same time. First, we thoroughly describe and prove our $O(n)$ algorithm for
finding the convex hull of a point set in 2D. Then, we show with experiments our classifier model built based on this idea
is very competitive compared with existing sophisticated classification algorithms included in commercial
statistical applications such as MATLAB.
\end{abstract}

\section{Introduction}
\label{sec:intro}
Data mining topics such as classification~\cite{AutoAtt:11, SVMInd:11, FineClassification:13, GAIA:10, Raymond:12, SMCC:12, Intrusion:10, DiscriminantClassifier:07}, clustering~\cite{BIRCH:96, DDFactors:11, DCInference:12, ClusterForest:13, SMCC:12, AdvancedClust:11, ArrayStore:11, LatentOlap:11, SparseGraph:11}, frequent pattern mining~\cite{ARM2:00, ARM1:94, MaxFreq:13}, frequent sub-structure mining~\cite{MaxFreq:13, SMCC:12}, regression~\cite{Bilal:11, Ashraf:11}, data cleaning~\cite{ERACER:10, AdvancedClust:11, AutoAtt:11}, ranking~\cite{PageRank:11}, data warehousing~\cite{LatentOlap:11}, recommender systems~\cite{TopRecs:11, Recsplorer:10, Package:12}, bio-informatics~\cite{Raymond:12}, outlier detection~\cite{BSkyTree:10, IOSky:13, efficientSkyline:11}, nearest neighbors~\cite{FuzzyNN:10, NN:12}
and social networks~\cite{SN1,SN2}, have been widely discussed in data management and prediction.
There has been plenty of work on classification as one of the main techniques
for supervised learning. Figure~\ref{fig:classificationexp}, shows a small example where
we have sets of points in a plane, each of which belonging to one category, demonstrated by different
shapes. A classifier model, is given data vectors in 2 dimensional space (2D), with labels (i.e. training), and is
expected to predict, and assign new objects with missing labels to their correct categories(i.e. testing).

\begin{figure}[t]
  \centering
  \includegraphics[scale=0.4, trim = 20mm 0mm 0mm 0mm]{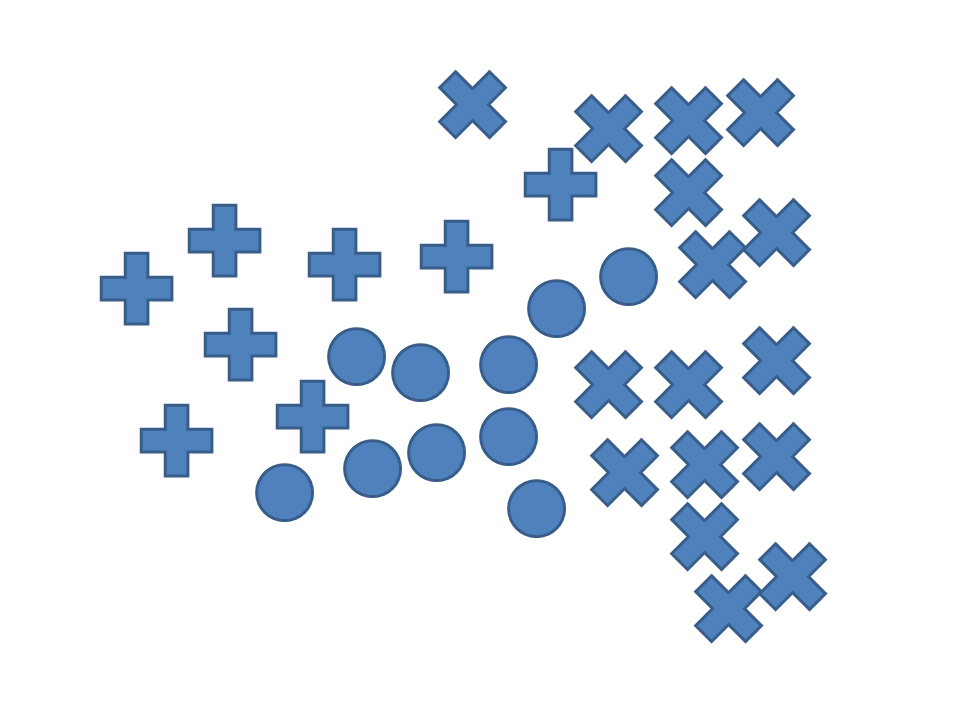}
  \caption{Classification Example with 3 Classes}
  \label{fig:classificationexp}
\end{figure}

There exist a variety of classification algorithms in machine learning and data mining.
Most popular classifiers are known as discriminant classifiers. Discriminant classifiers
aim at statistical or probabilistic modeling in order to find an objective function. Then,
optimization and numerical methods are used in order to find optimal parameter values. Having found
these optimal values, we can find the decision boundaries that divide the space into regions
that separate objects from different categories. It is often the case that data is not linearly
separable. This leads to misclassification errors most of the times. In order to minimize misclassification
error, people use methods such as regularization, kernel transformations, feature extraction and feature selection~\cite{BishopML}. Examples of discriminant classifiers include Support Vector Machines and Logistic Regression~\cite{SVMInd:11, DiscriminantClassifier:07}.

Other types of classifiers are Decision Trees, Rule-based~\cite{RuleClass1, RuleClass2, RuleClass3} methods and Nearest Neighbor methods. Most of
these methods have practical shortcomings. Discriminant classifiers need to optimize their objective
function and this may not be feasible in reasonable time for big data. Besides, it is a challenge
to find straight forward parallel implementations of these optimization algorithms. In web-based
scenarios, data changes very frequently~\cite{SMCC:12}. This requires either algorithms with highly scalable training
 phase, or models we can sequentially update. Although time always plays a key role and sometimes
 sequential update may not be optimal~\cite{RecSys}. Decision Trees and
Rule-based classifiers also suffer from the same shortcoming in practical scenarios. In many cases,
the theoretical problem defined to solve the classification problem is NP-hard. Nearest Neighbor methods
are efficiently applicable if data is stored in data structures such as kd-trees for nearest neighbor search. Despite their efficiency
in execution, they lack accuracy even for slightly challenging inputs. We demonstrate this with experiments in
Section~\ref{sec:expn}, and briefly explain how each classification algorithm works.

In this paper, rather than solving optimization problem, we use \textbf{\emph{Computational Geometry}}, in order to
build an accurate classifier. We use 2D convex hulls, using all possible 2 dimensional
projections (i.e. all possible pairs of columns regardless of order). Figure~\ref{fig:convex1}, shows an example
of the convex hull of a point set $P$. In order to build classifiers, we project the input dataset with $d$ dimensions to all possible $\binom{d}{2}$ planes. In each plane, having partitioned the training data into different classes, we find the 2D convex hull for each class (\emph{Select-Project-ConvexHull}). This results in $C \times \binom{d}{2}$ convex hulls, where $C$ is
the number of classes. Given a new testing instance with
$d$ feature values, we check for all existing $C \times \binom{d}{2}$ convex hulls, whether they contain the corresponding $2$ dimensional projection($\pi$).
We find the class $c_{max}$, that scores highest (i.e. its boundaries contain the point in more
2D projections) and assign the class label.
Since $d$ is typically a small constant in practice, we are not worried about the testing time.
Besides, using parallelization, testing time is negligible. We also propose a filtering approach to choose only the most discriminant features in Section~\ref{sec:expn},
that results in accuracy improvements as well. We explain our classification algorithm in more detail in Section~\ref{sec:classificationAlgo}, after providing the necessary computational geometry
 background. We make the following contributions in this paper:

\begin{enumerate}
\item We explain the Convex Hull problem from Computational Geometry~\cite{ComputationalGeometryBook}.
We provide algorithmic background in terms of the running time, and propose
an algorithm with $O(n)$ expected running time. We also prove its correctness.
Besides, we calculate the \emph{"constant"} through probabilistic analysis, and our
experiments show our calculated constant is reliable for different sizes
of data. Database community has shown tremendous interest in solving problems
formulated similar to convex hulls such as designing algorithms for finding Skylines~\cite{BSkyTree:10, IOSky:13, efficientSkyline:11}.
\item We propose and explain our classification algorithm, DataGrinder(DGR),
using 2D convex hulls. We also propose tricks for tuning the classifier by
filtering weak features that results in considerable accuracy improvement, in the case of one dataset.
\item We propose parallel algorithms for implementing DataGrinder at different
levels including data partitioning as well as parallel convex hull algorithms,
using the divide and conquer method.
\item We propose a method for random data generation and testing classifiers.
Our proposed testing methodology controls the hardness of classification using two
parameters. We conduct a comprehensive set of experiments on randomly generated
and real datasets. Our experiments show DataGrinder is competitive against the most widely known
commercial classifiers in accuracy, while being extremely scalable.
\end{enumerate}

\section{Convex Hull Background}
\label{sec:convexBackground}
\begin{figure}[t]
\subfigure[Convex Hull of $P$]{
\includegraphics[scale=0.3]{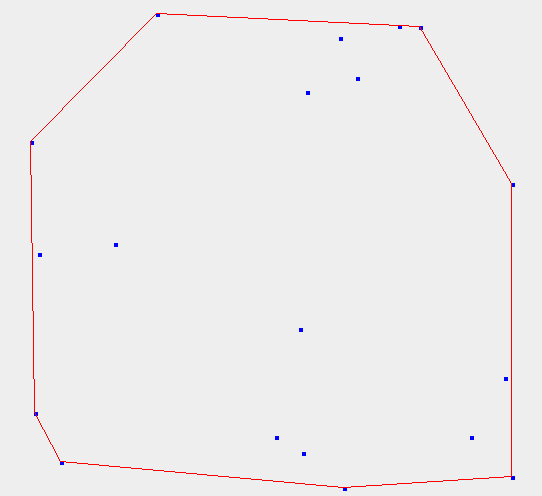}
\label{fig:convex1}
}
\subfigure[Convex Hull and Line Segments]{
\includegraphics[scale=0.5]{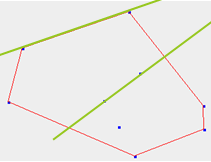}
\label{fig:convex2}
}
\caption{Convex Hull of a point set $P$. The set of all line segments that have
every other point only in one of their half spaces is a polygon that is called the convex hull $CH(P)$.}
\end{figure}

Convex Hull of a point set in 2D, $CH(P)$, is a set of points such that every
point in $P$ can be computed as a positive linear combination (all the weights are positive), of the points in $CH(P)$.
For this reason, it is important in applications where we are interested in
finding mixtures using some baseline prototype vectors. In a sense,
convex hull of a point set is a small subset of the points that wraps
around a point set, and can represent any point in the point set with
a positive linear combination. We can represent convex hull as a polygon,
that contains every other point in $P$, within its boundaries. This polygon
that wraps around all points can be extremely useful for applications in Machine Learning
when we want to define the boundaries of a group of points (class). It also
has the base ingredients to represent any point in that class. For instance,
we can use convex hull along with radial basis or any other sort of function
in order to construct a kernel and represent every point in a new feature space.

Moreover, in Computational Geometry, problems such as finding \emph{half space intersection}
can be reduced to finding convex hull and this highlights the importance of exploring more computationally
efficient algorithms. In many real life applications we deal with datasets with thousands
or millions of data points, and existing $O(nlog(n))$ algorithms fail to find convex hull in a timely manner.
Other than problems we can directly model with convex hull, there are many other domains such as
web mining where we deal with large graphs.  We can also use properties of these domains such as
link structure in order to define entities such as web pages in a multidimensional Euclidean space, and
use convex hull for modeling~\cite{SMCC:12}.

We only focus on the 2D case but our heuristics and ideas are generalizable to
higher dimensions. We use 2D for simplicity because it is more intuitive for problem solving
and leave generalized version to our future work. Moreover, in our present application of convex hulls
(i.e. classification), we are seeking data boundaries as tight as possible while still maintaining
properties of binary feature correlations. This, kind of resembles entries of a covariance matrix,
in Multivariate Gaussian Distributions that can be used for Principle Component Analysis as well.

It is provable that the best possible worse case running time for this problem is $O(nlog(n))$, since
sorting can be reduced to the convex hull problem~\cite{ComputationalGeometryBook}. In fact, the most
efficient classic convex hull algorithms use sorting. First, we sort all the points in a dataset according to one coordinate. Then,
using a left to right scan of the sorted list, we iterate over other points and remove
any points that do not belong to the convex hull, in linear time. In order to do so, they use geometric
properties of the points on convex hull and line segments between them. Figure~\ref{fig:convex1} shows
a point set along with a polygon that wraps around it. If we extend each line segment
in both ends, we obtain a line such that every other point in $P$, is located on one side (i.e. half space).

We devise an algorithm, that despite its $O(n^2)$ worst case running time,
achieves $O(n)$ expected running time if $P$ is distributed uniformly,
and dimensions are independent. We also prove for independent \emph{Normal} distributions. Previous work in Computational Geometry also approves the possibility
of $O(n)$ expected running time, if the algorithm is designed within the given framework~\cite{Convex:81}.
Here, we thoroughly describe the algorithm and provide pseudo code as well as average case
analysis for computing the constant. Regardless of the data instance, we can always devise strategies
to avoid the worst case through smart query optimization, and use of empirical algorithms.

Our 2D convex hull algorithm avoids paying the initial $nlogn$ sorting time. Instead, in every iteration
our new algorithm finds the next minimum of the list in $O(|candidates|)$, and using the new point, it
uses a heuristic to remove other points from the candidate set, that do not qualify to be on convex hull.
This is what we refer to as \emph{Candidate Elimination} process. Once we process all the candidates and
remain with an empty \textbf{\emph{candidate set}}, we have found the convex hull. Our theoretical analysis
as well as our quantitative experimental results, suggest that repeating this process results in $O(n)$ expected
running time, for finding 2D convex hull. \textbf{This iterative candidate elimination process enables us to find the convex
hull of up to $1000000$ points in less than $20$ seconds while the existing classic algorithm fails to terminate
in in a timely manner (after 8 hours)}. It is worth highlighting again that although the classic algorithm
 has a better worst case running time, it fails in practice. In the rest of this section, we formally define the convex hull problem and
discuss naive and classic solutions. In the subsequent subsections, we discuss a new algorithm based on candidate elimination,
and discuss its expected running time. Eventually, we show with experiments that the improvement achieved using
this pruning heuristic is indeed considerable, and indeed it results in linear expected running time.

\subsection{\textbf{Convex Hull Problem Definition}}
\label{sec:probdef}
Convex Hull of a point set $P = \{p_1, p_2, ..., p_n\}$, is best defined
intuitively as a polygon that wraps around all the points in $P$. We can
formally define this polygon as follows.

\begin{mydef}
\textbf{\emph{Convex Hull}} of a point set, $CH(P)$, is the set of \textbf{\emph{all}}
line segments, $pq$, between every two pair of points from $P$, such that
every other point is located on one side of $pq$. We can also use negative or positive, in
order to refer to these two \emph{"half spaces"}. In other words, every other point either
belongs to the negative half space, or to the positive half space.
\end{mydef}

Naive algorithm for finding $CH(P)$ is as follows:

\begin{itemize}
  \item Produce every pair of points $p_i$ and $p_j$: $O(n^2)$
  \item Find the line segment between $p_i$ and $p_j$ in \emph{constant time}.
  \item Check if every other point belongs to either negative or positive half space.
  If yes, add the line segment to $CH(P)$ otherwise discard: $O(n)$.
\end{itemize}

Figure~\ref{fig:convex2}, shows examples of both types of line segments.
Overall running time of the naive algorithm is $O(n^3)$, since
it scans $P$ once for every pair of points. This results in a process
that takes minimal usage of geometric properties and is extremely inefficient.
Using geometric properties, we can aim at designing a more targeted
process. Next, we describe $O(nlog(n))$ algorithm that first sorts
all the points by their $x$-coordinate.

\subsection{\textbf{Background of Algorithms ($nlogn$ algorithm)}}
\label{sec:classicAlgo}

Rather than arbitrarily exploring the search space, first
we sort the point set based on one coordinate (typically $x$).
Points in $P$, start from $X_{min}$ and end at $X_{max}$ after
sorting.

\begin{figure}[t]
  \centering
  \includegraphics[scale=0.4, trim = 0mm 0mm 0mm 0mm]{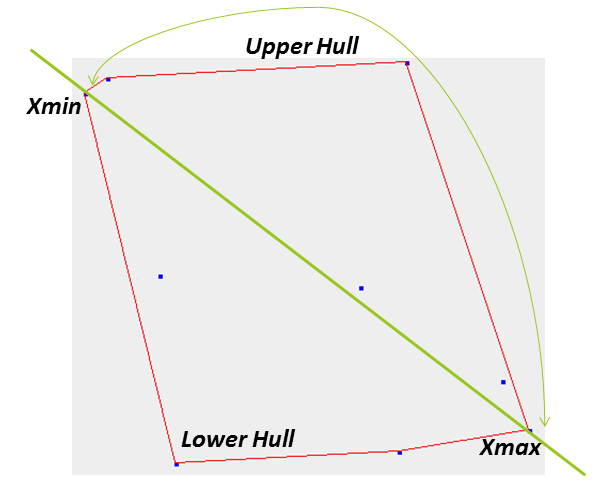}
  \caption{Upper and Lower Hulls between $X_{min}$ and $X_{max}$.}
  \label{fig:UpperLowerHulls}
\end{figure}

Figure~\ref{fig:UpperLowerHulls}, splits the convex hull into two
parts, both from $X_{min}$ to $X_{max}$. We use Upper Hull ($UH$), in order
to refer to the part above the line segment between $X_{min}$ and $X_{max}$;
we use Lower Hull ($LH$), in order to refer to the lower part. \emph{Classic algorithm}
for finding convex hull invokes $FindUpperHull(P)$ and $FindLowerHull(P)$ functions,
in order to find the convex hull of $P$, each in $O(n)$. Therefore, the total execution time
is $O(nlog(n))$. Finding upper and lower hulls separately are two symmetric procedures
with respect to each other. Here, we only present for upper hull. Algorithm~\ref{algo:findUpper},
computes the upper hull of the sorted point set by scanning from $X_{min}$ to $X_{max}$. It is
intuitive that we visit all the points in $UH(P)$ in sequence, once we do scanning from
left to right, although with the rest of the points in between. The idea is to: 1) perform
this scanning; 2) identify and maintain points that belong to $UH(P)$, and 3) discard all the other
points. We change $i$ from $1$ to $n$, and start
the $i^{th}$ iteration having computed the correct upper hull of the points $\{p_1 ... p_{i-1}\}$. We
add $p_{i}$ to $UH(P)$, because we know it belongs to the upper hull of $\{p_1 ... p_i\}$,
with the largest $x-$coordinate value so far. We read $UH(P)$ in reverse order and remove any
points that do not belong to the convex hull of $\{p_1 ... p_i\}$, until we stop.

\begin{algorithm}
\begin{algorithmic}[1]
\caption{FindUpperHull$(P)$}
\label{algo:findUpper}
\REQUIRE Point set $P$,
sorted by $x-$coordinate
\ENSURE Upper hull of $P$, $UH(P)$
\STATE $UH = $ initialize empty
\FOR{$i=1$ to $n$}
\STATE $UH.append(p_i)$
\STATE $\ell = i$
\WHILE{$(\ell > 2) $\&\&$ (!UHCheck(p_{\ell-2}, p_{\ell-1}, p_\ell))$}
\STATE remove $p_{\ell-1}$ from $UH$
\STATE $\ell = \ell - 1$
\ENDWHILE
\ENDFOR
\STATE \textbf{return} $UH$
\end{algorithmic}
\end{algorithm}

After appending $p_i$ to $UH$, we check for the last $3$ points
in $UH$, if they belong to the correct convex hull or not. In order
to do so, $UHCheck(p_{\ell-2}, p_{\ell-1}, p_{\ell})$ returns true if $p_{\ell-1}$ is above
the line segment from $p_{\ell-2}$ to $p_{\ell}$. This means $p_{\ell-1}$ belongs to $UH_{i}$.
Otherwise, it is removed and we repeat this process until $UHCheck$ returns \textbf{true}, and obtain
the correct upper hull of $p_1$ to $p_i$. Equation~\ref{eq:checkUp}, computes a \emph{sign} variable.
If sign is a non-negative number, $UHCheck$ returns true.


\begin{equation}
\label{eq:checkUp}
\begin{array}[b]{r}
sign = (p_{\ell-1}.y - p_{\ell-2}.y)(p_{\ell}.x-p_{\ell-2}.x) - \\
(p_{\ell-1}.x - p_{\ell-2}.x)(p_{\ell}.y-p_{\ell-2}.y)
\end{array}
\end{equation}

Figure~\ref{fig:classicAlgoEx}, shows a snapshot during the execution, where two
middle points need to be removed after adding $p_5$.
It also shows the correct upper hull after we exit the while loop. We exit the while
loop when for the first time we find a middle point which passes the convex test (Equation~\ref{eq:checkUp}).
When this happens, it is guaranteed that $UH_5(P)$ is convex since the last step is convex and also
we know the rest is constructed convex starting from $p_1 = X_{min}$. We exit the while loop also when only
$2$ points are left, in which case there is no middle point and $UH_i(P)$ is always convex.

\begin{figure}[t]
  \centering
  \includegraphics[scale=0.5, trim = 0mm 0mm 0mm 0mm]{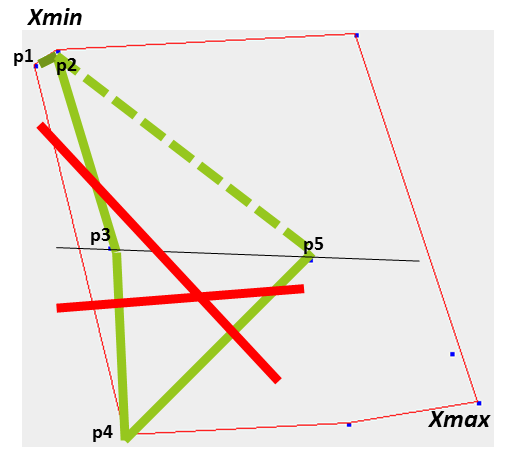}
  \caption{A snapshot of running the classic algorithm for finding upper hull. Convex Hull of $p_1 ... p_4$ is
  constructed by adding the points one by one without removing any points. When $p_5$ is added, $p_4$ needs
  to be removed because upper hull of $p_1 ... p_5$ is not convex anymore. Moving backwards iteratively, FindUpperHull (Algorithm~\ref{algo:findUpper}), removes $p_4$ and $p_3$ and we remain with $UH_5(P) = \{p_1, p_2, p_5\}$.}
  \label{fig:classicAlgoEx}
\end{figure}

It is worth noting, it can happen that two points appear in sorted $P$ next
to each other with the same value for $x-$coordinate. In this situation, we order
all the points with the same $x-$value based on their $y-$coordinate to preserve
the correctness of Algorithm~\ref{algo:findUpper}. If two points
are exactly the same, one can be removed without hurting the correctness of the convex hull
algorithm.

\section{Finding Convex Hull by Candidate Elimination}
\label{sec:newAlgo}

Classic convex hull algorithm presented so far needs to perform an
initial sorting with cost $O(nlog(n))$. We know we can not do better
in the worst case for finding convex hull. Despite $O(nlog(n))$ worst case
running time, in many cases we may be able to use heuristics in order
to make the problem size smaller and achieve better Expected running time.
In this section, we describe a process called \emph{"Candidate Elimination"},
that we use, instead of sorting. We use candidate elimination along with existing
\emph{FindUpperHull} procedure, in order to solve the problem. The idea is
to avoid sorting, maintain candidate lists of points for different parts
of the convex hull, and find the next \emph{minimum} value from a smaller
candidate list, rather than paying $O(nlog(n))$ for sorting in the beginning.

\begin{figure}[t]
  \centering
  \includegraphics[scale=0.4, trim = 0mm 0mm 0mm 0mm]{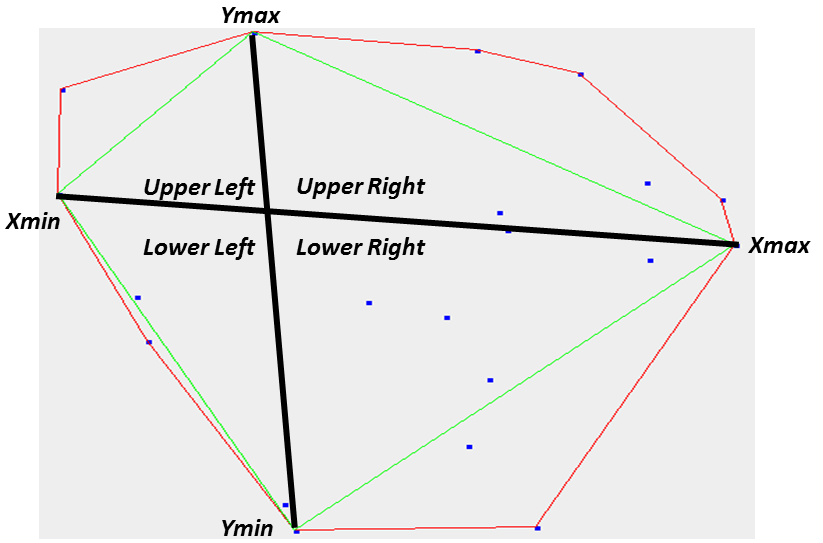}
  \caption{Dividing the plane into 4 quarters using $X_{min}$, $X_{max}$, $Y_{min}$ and $Y_{max}$.}
  \label{fig:CandidateElimination1}
\end{figure}

Figure~\ref{fig:CandidateElimination1}, divides the plane as well as the convex hull of the
point set into $4$ quarters, using minimum and maximum $x$ and $y$ values in the point set.
We use $UpperLeftHull$, $UpperRightHull$, $LowerLeftHull$ and $LowerRightHull$, in order to
refer to these $4$ quarters.

\begin{mylemma}
\label{lemma:candidate}
All of the points on $UpperLeftHull$ are on or above the line from $X_{min}$ to $Y_{max}$.
\end{mylemma}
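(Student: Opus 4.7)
The plan is to argue directly by using how the chord $\overline{X_{min} Y_{max}}$ partitions the convex polygon $CH(P)$. Let $L$ denote the line through $X_{min}$ and $Y_{max}$. Both $X_{min}$ and $Y_{max}$ are vertices of $CH(P)$ since they are extreme in the $x$ and $y$ directions, so $\overline{X_{min} Y_{max}}$ is a genuine chord that splits the polygon into two convex sub-polygons. By the definition of the quartering in Figure~\ref{fig:CandidateElimination1}, the $UpperLeftHull$ is precisely the portion of the hull boundary from $X_{min}$ to $Y_{max}$ that avoids $X_{max}$ and $Y_{min}$, so it lies entirely on one side of this chord, while the long way around, passing through $X_{max}$ and $Y_{min}$, lies on the other side.

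Next, I would show that both $X_{max}$ and $Y_{min}$ lie weakly below $L$. Since $x_{X_{min}} \leq x_{Y_{max}}$ and $y_{X_{min}} \leq y_{Y_{max}}$, the line $L$ has nonnegative slope. The point $X_{max}$ satisfies $x_{X_{max}} \geq x_{Y_{max}}$ and $y_{X_{max}} \leq y_{Y_{max}}$; extrapolating $L$ to $x = x_{X_{max}}$ gives a height at least $y_{Y_{max}} \geq y_{X_{max}}$, so $X_{max}$ sits on or below $L$. A symmetric argument using $y_{Y_{min}} \leq y_{X_{min}}$ and $x_{Y_{min}} \geq x_{X_{min}}$ shows that $Y_{min}$ is also on or below $L$.

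Finally, by convexity of the sub-polygon on the side containing $X_{max}$ and $Y_{min}$, its vertices and thus its entire region lie weakly below $L$. Hence the complementary sub-polygon, which is bounded by $\overline{X_{min} Y_{max}}$ and by the $UpperLeftHull$ polyline, lies weakly above $L$, and in particular every vertex of the $UpperLeftHull$ is on or above the line from $X_{min}$ to $Y_{max}$, which is exactly the statement of the lemma.

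The main obstacle is not the geometry but handling the degenerate cases cleanly: when $X_{min}$ and $Y_{max}$ coincide (so there is no chord and the $UpperLeftHull$ is trivial), when several points tie for an extreme coordinate (so ``the'' point $X_{min}$ etc.\ is ambiguous), or when a sub-polygon collapses to a segment. These can be resolved with the same tie-breaking convention already used for Algorithm~\ref{algo:findUpper} (order ties by the other coordinate), after which the weak inequality ``on or above'' absorbs the remaining boundary cases without altering the main argument.
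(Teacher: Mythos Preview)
Your argument is correct, but it is considerably more elaborate than the paper's own proof. The paper dispatches the lemma in two lines: it simply notes that the $UpperLeftHull$ runs from $X_{min}$ to $Y_{max}$ and is convex, and concludes immediately that no point on it can lie below the chord. In effect the paper takes the orientation of the bulge for granted. Your approach instead earns that orientation: you split $CH(P)$ along the chord, show explicitly via the slope computation that $X_{max}$ and $Y_{min}$ fall weakly below $L$, and then infer that the complementary arc (the $UpperLeftHull$) lies weakly above $L$. This is a genuinely different decomposition, and it buys you something the paper's proof glosses over---a concrete reason why the convex arc is on the \emph{upper} side of the chord rather than the lower one. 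Your discussion of degenerate cases (coinciding extremes, ties) is also more careful than anything in the paper. The trade-off is length: for a statement this close to the definition, the paper's one-line appeal to convexity is arguably adequate, while your version would be more at home in a fully rigorous geometry write-up.
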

\begin{proof}
We know the upper left hull starts at $X_{min}$ and ends at $Y_{max}$. We also know that
the upper left hull is convex. Therefore, none of the points on it can be below
the line.
\end{proof}

Lemma~\ref{lemma:candidate}, provides an opportunity for candidate elimination in the beginning.
We can draw a line from $X_{min}$ to $Y_{max}$, and remove any points below the line, to obtain
a list of $UpperLeftHull$ candidates. Using symmetry, we can find a candidate
list for $UpperRightHull$ by choosing all the points above the line that goes through
$Y_{max}$ and $X_{max}$. Lower left candidates are those on or below the line from $X_{min}$
to $Y_{min}$, and lower right candidates are on or below the line from $Y_{min}$ to $X_{max}$.
Finding minimum and maximum $x$ and $y$ coordinate values can be done in $O(n)$. Therefore,
by paying $O(n)$, we can discard many points and continue with smaller input size and this
obviously can considerably improve the performance. We use \emph{Candidate Elimination}, to
refer to this process that makes more targeted use of both $x$ and $y$ coordinates. Figure~\ref{fig:CandidateArea},
shows a minimal box that contains all of the points in $P$, using $X_{min}$, $X_{max}$, $Y_{min}$ and $Y_{max}$.
Inside this box, we separate $4$ triangles in $4$ corners. These are the only areas where
convex hull candidates can appear. We use \emph{\textbf{Candidate Area}} in order to refer to any
area inside the box, where convex hull candidates can appear. In Figure~\ref{fig:CandidateArea},
four triangles form the candidate area.

\begin{figure}[t]
  \centering
  \includegraphics[scale=0.4, trim = 0mm 0mm 0mm 0mm]{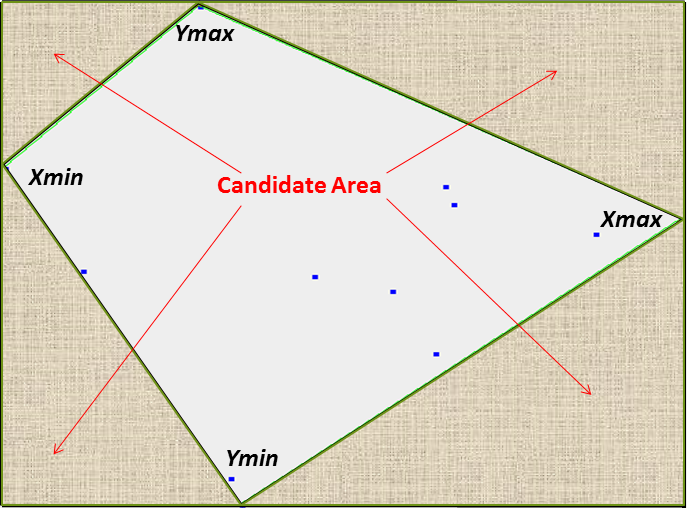}
  \caption{Box area vs. Candidate Area.}
  \label{fig:CandidateArea}
\end{figure}

\begin{mylemma}
\label{lemma:Area1}
The expected number of candidates after the first candidate elimination is $n/2$.
\end{mylemma}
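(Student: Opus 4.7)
The plan is to reduce this counting problem to a geometric area computation: under the paper's assumption of uniform distribution with independent coordinates, I will show that the expected fraction of the bounding box covered by the candidate region (the four corner triangles) is exactly $1/2$, and then linearity of expectation gives $n/2$.

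After rescaling the bounding box to the unit square $[0,1]^2$, let $y_L, y_R$ denote the $y$-coordinates of $X_{min}$ and $X_{max}$, and let $x_B, x_T$ denote the $x$-coordinates of $Y_{min}$ and $Y_{max}$. The complement of the candidate region inside the box is exactly the convex quadrilateral $Q$ with vertices $(0, y_L)$, $(x_T, 1)$, $(1, y_R)$, $(x_B, 0)$. I would apply the Shoelace formula to $Q$; the signed sum simplifies to $(x_T - x_B)(y_R - y_L) - 1$, which is always $\leq 0$ on $[0,1]^4$, so
\[
\mathrm{Area}(Q) \;=\; \tfrac{1}{2}\bigl(1 - (x_T - x_B)(y_R - y_L)\bigr),
\]
and hence the candidate area equals $\tfrac{1}{2}\bigl(1 + (x_T - x_B)(y_R - y_L)\bigr)$.

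Next I would take expectations. Under the assumption that the two coordinates of each point are independent and uniform, conditioning a point on being the arg-min (or arg-max) in one coordinate leaves the other coordinate uniform on $[0,1]$. Consequently $x_T, x_B, y_L, y_R$ are independent uniform $[0,1]$ variables, so by symmetry $E[x_T - x_B] = E[y_R - y_L] = 0$ and the expected candidate area is $1/2$. Linearity of expectation across the $n$ points then yields the claimed $n/2$.

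The most delicate step will be translating the expected candidate \emph{fraction} into the expected \emph{count}: the four extreme points themselves are always candidates, and the remaining $n-4$ are not quite uniform on the full bounding box because they are implicitly conditioned on not being the extremes. I would handle this by writing $E[\#\mathrm{candidates}] = \sum_{i=1}^n \Pr[p_i \in \mathrm{CandidateRegion}]$ and using exchangeability of the $n$ points, so that the conditioning on which points are extreme is absorbed symmetrically into each term, and the area identity above applies in expectation uniformly across $i$.
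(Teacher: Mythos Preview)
Your proposal follows the same skeleton as the paper's own proof --- define indicator variables $z_i$, identify $P(z_i=1)$ with an area ratio, sum by linearity --- but you supply the piece the paper omits. The paper simply \emph{asserts} $\text{CandidateArea}/\text{BoxArea}=1/2$ with no justification whatsoever; your Shoelace computation of the inner quadrilateral area $\tfrac12\bigl(1-(x_T-x_B)(y_R-y_L)\bigr)$ and the subsequent expectation step are exactly what is missing there. So the route is the same, but your version is strictly more complete.

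One caveat worth tightening: the claim that $x_T,x_B,y_L,y_R$ are \emph{independent uniform on $[0,1]$} after rescaling is not exact. Rescaling by the random extremes turns $x_T$ into a ratio of order-statistic gaps, with atoms at $0$ and $1$ whenever $Y_{\max}$ coincides with $X_{\min}$ or $X_{\max}$, and any coincidence among the four extreme points creates dependence; for $n=2$, for instance, $(x_T-x_B)(y_R-y_L)=1$ almost surely, not $0$. What does hold is $E[x_T]=E[x_B]=\tfrac12$ exactly by symmetry, and the coincidence events have probability $O(1/n)$, so the expected candidate fraction is $\tfrac12+O(1/n)$ and the expected count is $n/2+O(1)$ --- consistent with the concern you raise in your last paragraph about the four extremes always being candidates. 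The paper's proof ignores this entirely; you at least flag it, and the asymptotic statement survives.
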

\begin{proof}
We assume points are distributed uniformly in the plane. We also assume that $x$ and $y$
coordinates are uniform and independent. Given these assumptions, we define $z_i$ to be
a random variable. We assign $z_i = 1$, if the $i^{th}$ point in $P$ is in the candidate area.
We know $P(z_i = 1) = (CandidateArea/BoxArea) = 1/2$; therefore, $E(z_i) = 1/2$. There are
$n$ such points in the dataset, and we can use $E(Z) = \sum_{i=1}^{n} E(z_i)$ while $Z$ is
a random variable that takes values in $\{1 ... n\}$, that indicates the number of candidate
points all together after the first candidate elimination. Expected value of $Z$ is $n$
times expected value of $z_i$, equal to $n/2$, using linearity of expected value.
\end{proof}

\subsection{Convex Hull Algorithm}
\label{sec:algo}

The first candidate elimination step reduces
the expected number of candidates to half. Although this is a good heuristic,
we still need to eliminate more candidates, and find the correct convex hull. As
described earlier, we do this in $4$ smaller steps for $UpperLeftHull$, $UpperRightHull$,
$LowerLeftHull$, and $LowerRightHull$, separately. Here, we only describe the process for
$UpperLeftHull$, and we know the rest is symmetric for the three other quarters of the convex hull.
Algorithm~\ref{algo:findUpperLeft}, takes as input the list of upper left
candidates after the initial candidate elimination, that are on or above the line from $X_{min}$ to $Y_{max}$.
Please note, that the list is not sorted by $x-$coordinate anymore.
The idea is to avoid sorting the candidate list. Instead, we keep finding the
next smallest $x$, $NextX$, and repeat candidate elimination using $NextX$.
The justification behind replacing sorting with this operation, is the fact
that candidate list keeps getting smaller and smaller after performing
candidate eliminations in sequence. This makes the cost of finding
the next minimum negligible, even for large $n$.

\begin{algorithm}
\begin{algorithmic}[1]
\caption{FindUpperLeftHull$(ULCandidates)$}
\label{algo:findUpperLeft}
\REQUIRE $ULCandidates$,
list of candidates for upper left hull
\ENSURE Upper left hull of $P$, $ULH(P)$
\STATE $ULH = $ initialize empty
\WHILE{$ULCandidates.size > 0$ }
\STATE $NextX = removeLeftMostPoint(ULCandidates)$
\STATE $eliminateCandidates(ULCandidates, NextX, Y_{max})$
\STATE $ULH.append(NextX)$
\STATE $\ell = ULH.size$
\WHILE{$(\ell > 2) $\&\&$ (!UHCheck(p_{\ell-2}, p_{\ell-1}, p_\ell))$}
\STATE remove $p_{\ell-1}$ from $ULH$
\STATE $\ell = \ell - 1$
\ENDWHILE
\ENDWHILE
\STATE \textbf{return} $ULH$
\end{algorithmic}
\end{algorithm}

Rather than reading the next point from sorted $P$, in order to find upper left
hull, Algorithm~\ref{algo:findUpperLeft}, finds $NextX$ in line $3$ and removes
it from the list of upper left candidates. We pay $O(|ULCandidates|)$ cost to find
$NextX$. In line $4$, $eliminateCandidates$ repeats the same candidate elimination task using
$NextX$. In order to do so, we draw a line from $NextX$ to $Y_{max}$, and remove any
candidates below the line. In the rest of Algorithm~\ref{algo:findUpperLeft},
we pretend $NextX$ is read from a sorted list and repeat the same process
in order to fix $UpperHull$ that Algorithm~\ref{algo:findUpper} does, already
presented in Section~\ref{sec:classicAlgo}.

\section{Running Time Analysis}
\label{sec:runningTime}

\subsection{Worst Case Running Time}
\label{sec:worstTime}

There are three main steps in each iteration of finding convex hull by \emph{candidate elimination}:

\begin{itemize}
  \item Finding $NextX$, overall $O(|candidates|)$
  \item Candidate elimination, $O(|candidates|)$
  \item Fixing upper hull, $C$ (constant)
\end{itemize}

It is possible in the worst case, that all of the points in $P$ belong to the convex hull.
In this case, candidate elimination results in removing no candidates and repeating a $O(n)$
process $n$ times, resulting in $O(n^2)$ \emph{worst case} running time. For the current
classification problem, worst case scenario rarely happens.

\subsection{Expected Running Time}
\label{sec:expectedTime}

Since there are $4$ quarters and the expected number of candidates is $n/2$
after the initial candidate elimination, there is an expected number of $n/8$
candidates in each triangle. It is worth noting, we can use the product of
expected values of two random variables as the expected value of their product, because
all the random variables are independent~\footnote{Points are independently drawn from the uniform distribution.}.
This, is a natural assumption, used widely in Machine Learning~\cite{BishopML}.
We define $\alpha_0 = 1/8$ to be the elimination ratio, indicating the expected cost of finding
$NextX$, after the initial candidate elimination in each quarter. We present using $LowerRightHull$, to have more variety
in our examples. Subsequently, we can define, $0 < \alpha_1 < 1$, as elimination ratio in iteration
$1$ and, $0 < \alpha_2 < 1$, as elimination ratio in iteration $2$. The expected size of $LRCandidates$
after iteration $2$ is $(\alpha_0\alpha_1\alpha_2) \times n$.

\begin{mylemma}
Expected running time of finding the convex hull of the lower right quarter is $n/8 \times (\sum_{i=1}^{n/8}(\prod_{j=1}^{i}\alpha_j))$.
\end{mylemma}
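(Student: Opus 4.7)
The plan is to analyze the expected cost of Algorithm~\ref{algo:findUpperLeft} applied (symmetrically) to the lower-right candidate triangle, by computing the expected size of the candidate list at each iteration and then summing.

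First, I would observe that each iteration of the outer \texttt{while} loop performs three operations: finding the leftmost $x$ in the candidate set (line~3), performing candidate elimination by drawing the line from $NextX$ to $Y_{min}$ and discarding points above it (line~4), and fixing the hull via the inner \texttt{while} loop. The first two are linear in the current size $|LRCandidates|$, while the total amount of hull-fixing work across all iterations is $O(n)$ by a standard amortized argument (each point can be appended to and removed from $LRH$ at most once). So to leading order, the total expected cost is $\sum_i \mathbb{E}[|LRCandidates|_i]$, where $|LRCandidates|_i$ is the size of the candidate list at the start of iteration $i$.

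Second, I would determine $\mathbb{E}[|LRCandidates|_i]$ by induction. The base case follows from Lemma~\ref{lemma:Area1}: after the initial candidate elimination, $n/2$ points survive in expectation, and by the symmetry of the four corner triangles (each of equal area under the uniform/independent coordinate assumption), the expected count in the lower-right triangle is $n/8 = \alpha_0 \cdot n$. For the inductive step, I would invoke the definition of $\alpha_j$ as the elimination ratio of iteration $j$: conditional on the candidate list entering iteration $j$, the expected fraction surviving is $\alpha_j$, so $\mathbb{E}[|LRCandidates|_{i+1}] = \mathbb{E}[|LRCandidates|_i] \cdot \alpha_{i+1}$, and therefore $\mathbb{E}[|LRCandidates|_i] = (n/8)\prod_{j=1}^{i}\alpha_j$.

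Third, I would bound the number of iterations. Since each outer iteration removes at least $NextX$ itself from $LRCandidates$, the loop cannot run more than $n/8$ times (in expectation). Summing the per-iteration expected costs and applying linearity of expectation then yields total expected cost $\bigl(n/8\bigr)\sum_{i=1}^{n/8}\prod_{j=1}^{i}\alpha_j$, as claimed.

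The main obstacle is rigorously justifying the multiplicative decomposition $\mathbb{E}\bigl[\prod_{j=1}^{i}\alpha_j\bigr] = \prod_{j=1}^{i}\mathbb{E}[\alpha_j]$, because after the first elimination the surviving points are conditioned on lying in the triangle and on previously chosen $NextX$ values, so the $\alpha_j$ are not obviously independent random variables. The paper sidesteps this by invoking the i.i.d.\ uniformity assumption (footnoted) and treating each $\alpha_j$ as a deterministic expected ratio; I would do the same, noting that conditional on the position of $NextX$ and on the known surviving region, the remaining points are still uniform in that region, so the same area-ratio argument used in Lemma~\ref{lemma:Area1} can be reapplied at each iteration to define the next $\alpha_j$.
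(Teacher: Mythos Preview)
Your proposal is correct and follows essentially the same line as the paper: start from the expected $n/8$ candidates in the lower-right triangle, observe that each iteration costs $O(|LRCandidates|)$ and shrinks the list by the factor $\alpha_j$, and sum over at most $n/8$ iterations. You are in fact more careful than the paper (amortizing the hull-fixing cost, flagging the independence issue behind the multiplicative decomposition), but the underlying argument is the same.
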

\begin{proof}
We know $\alpha_0$ is the initial elimination ratio that reduces the number of lower
right candidates to $n/8$. Therefore, this is the expected size, we start with. In each
iteration, we pay the cost $O(|LRCandidates|)$. The number of $LRCandidates$ after iteration
$2$, is $\alpha_1\alpha_2 \times n/8$. Similarly, the number of candidates after the $i^{th}$ iteration
is $\prod_{j=1}^{i}\alpha_j$. Therefore, we pay $n/8 \times \prod_{j=1}^{i}\alpha_j$ cost, which
is the expected size of $LRCandidates$. Adding up for a maximum of $n/8$ iterations we get $(\sum_{i=1}^{n/8}(\prod_{j=1}^{i}\alpha_j))$, the total expected cost of finding $LowerRightHull$.
Although we write the sum for $n/8$ iterations, it is quite likely that in the end the expected
cost is $0$ or close to $0$. This is because an exponentially smaller coefficient is
multiplied by $n/8$. This is because all $\alpha_1 ... \alpha_{n/8}$ are smaller than $1$.
\end{proof}

\begin{mylemma}
\label{lemma:candidate2}
$\alpha$ is a decreasing function that approaches $3/4$.
\end{mylemma}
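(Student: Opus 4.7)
The plan is to analyze the geometric recursion that the candidate region undergoes at each iteration, identify its limiting shape, and compute the expected shrinkage ratio in that limit. By the initial elimination (Lemma~\ref{lemma:Area1}) restricted to one quarter, say upper-left, the surviving candidates lie uniformly inside a right triangle $T_0$ with vertices $X_{\min}$, $Y_{\max}$, and the upper-left corner of the bounding box. I would first show that, so long as sufficiently many candidates remain, the region that survives iteration $i$ is again a right triangle $T_i$: the vertex $Y_{\max}$ stays fixed, while the third vertex (essentially the current $NextX$) moves rightward and upward as we iterate, inducing a random recursion on the legs $(w_i, h_i)$ of $T_i$.

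Next, I would compute $E[\alpha_i] = E[\mathrm{area}(T_i)/\mathrm{area}(T_{i-1})]$ using the order statistics of the leftmost candidate in $T_{i-1}$: conditional on its $x$-coordinate lying near the left edge of $T_{i-1}$, its $y$-coordinate is approximately uniform along that edge, so the elimination line from $NextX$ to $Y_{\max}$ cuts off a sub-triangle whose expected area admits a closed form. Plugging this in expresses $E[\alpha_i]$ as a function of the aspect ratio $w_{i-1}/h_{i-1}$ of $T_{i-1}$. The ``decreasing'' claim then follows from showing this function is monotone in $i$ (the triangles become increasingly elongated as iterations proceed), and the limit $3/4$ is obtained by substituting the fixed point of the induced aspect-ratio recursion.

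The main obstacle I anticipate is the distributional calculation for the leftmost candidate inside a triangular region whose aspect ratio changes across iterations: one must identify the joint distribution of its two coordinates---a R\'enyi-type extreme-value computation on a triangular domain---and carry its dependence on the triangle shape through the recursion, taking care that the independence of $x$ and $y$ assumed in Section~\ref{sec:expectedTime} is preserved after conditioning on being in $T_{i-1}$. A secondary difficulty is matching the boundary regimes, where only $O(1)$ candidates remain in $T_i$ and the ``approximately uniform on the left edge'' approximation breaks down; the limit statement will therefore have to be interpreted as an asymptotic over the range of iterations in which many candidates still survive.
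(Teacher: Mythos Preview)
Your plan is far more elaborate than what the paper actually does, and in one place it rests on a geometric claim that does not match the paper's setup.

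The paper's argument is a two-line area computation, with no order statistics, no recursion on aspect ratios, and no extreme-value theory. Referring to Figure~\ref{fig:CandidateArea2} (for the lower-right quarter), it labels the relevant side lengths $a$, $b$, $c$, writes the eliminated triangle's area as $ac/2$ and the total pre-elimination area as $bc/2 + ac + ac/2$, and reads off
\[
\alpha \;=\; \frac{bc/2 + ac}{bc/2 + ac + ac/2} \;=\; \frac{2a+b}{3a+b}.
\]
The ``decreasing toward $3/4$'' claim is then the purely deterministic observation that as the algorithm advances toward $X_{\max}$ the parameter $b$ shrinks toward $a$, whence $\alpha \to 3a/4a = 3/4$. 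No probability enters beyond the implicit ``expected fraction of surviving points equals area ratio''; in particular the distribution of $NextX$ plays no role in the proof of this lemma (it is only used later, in Theorem~\ref{theo:linearProof}).

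Your proposal, by contrast, sets up a stochastic recursion on the candidate region, plans to locate the leftmost point via a R\'enyi-type calculation, and would extract $3/4$ as the fixed point of an induced aspect-ratio map. That programme could perhaps be carried out, but note a concrete mismatch: you assert that each surviving region $T_i$ is again a right triangle. The paper's own area decomposition $bc/2 + ac$ for the post-elimination candidate region is a triangle-plus-rectangle, not a single right triangle, so your recursive shape hypothesis would need to be revised before the aspect-ratio recursion is even well defined. Once you notice that the paper is simply tracking the ratio $(2a+b)/(3a+b)$ as $b\downarrow a$, all of the probabilistic machinery you propose becomes unnecessary for this lemma.
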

\begin{proof}
Suppose at some iteration we have found $NextX$ and we perform
candidate elimination. Figure~\ref{fig:CandidateArea2}, compares
candidate area to eliminated area. Candidate area is shown below the
line from $NextX$ to $X_{max}$. Eliminated area is a triangle with $area = ac/2$.
Total area is $bc/2 + ac + ac/2$. Therefore, $\alpha = (bc/2 + ac)/(bc/2 + ac + ac/2) = (2a+b)/(3a+b)$.
As we get closer to $X_{max}$, $b$ gets closer to $a$ and the value of $\alpha$ decreases to $3a/4a = 3/4$.
\end{proof}

\begin{figure}[t]
  \centering
  \includegraphics[scale=0.4, trim = 0mm 0mm 0mm 0mm]{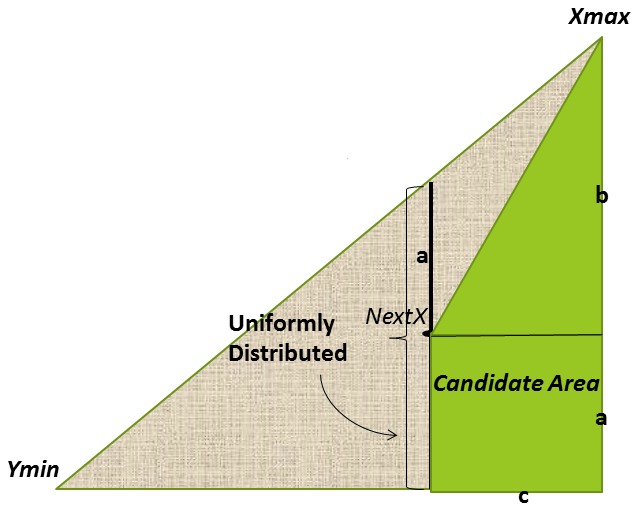}
  \caption{Candidate elimination after finding $NextX$}
  \label{fig:CandidateArea2}
\end{figure}

Using lemma~\ref{lemma:candidate2}, we know $(\prod_{j=1}^{i}\alpha_j)$ is
a product that decreases with $i$. Since $\alpha$ is smaller than $1$ all the time, and
$\alpha$ is a decreasing function. We can \emph{"assume"} $(\prod_{j=1}^{i}\alpha_j)$
exponentially decreases with $i$ and we can bound the expected running time using the sum of
a geometric series as follows: $n/8 \times \sum_{i=1}^{n/8}\bar{\alpha}^i$. Since $\bar{\alpha}$ is
a constant between $0$ and $1$, we know the sum of the geometric series is constant and so is
the expected running time. We know minimum value for $\alpha$ is $3/4$ and $\alpha < 1$. Using
average value of $3/4$ and $1$, we can approximate $\bar{\alpha} = 7/8$, resulting in $O(n)$ points
accessed during the execution of the convex hull algorithm for each corner. Finally, we can approximate $4n$
as the total number of points accessed during the execution for finding the convex hull of $4$ quarters. Next, we aim at calculating a \textbf{\emph{constant}} upper bound for the expected cost, in order to \textbf{\emph{prove}} the expected cost is linear, when convex hull is found by candidate elimination, instead of using $\bar{\alpha}$ which
is only raw approximation!

\begin{mytheo}
\label{theo:linearProof}
The expected value of $\alpha_1$ is constant $< 1$ and expected running time is bounded by
the sum of $\alpha_1$'s geometric series.
\end{mytheo}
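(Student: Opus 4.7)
The plan is to split the theorem into its two assertions and handle them sequentially, building on the geometric formula from Lemma~\ref{lemma:candidate2} and the uniform-independence assumption already used in Lemma~\ref{lemma:Area1}.

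For the first assertion, I would start from the explicit formula $\alpha = (2a+b)/(3a+b)$ derived in Lemma~\ref{lemma:candidate2} and treat $a, b$ as random variables determined by the position of $NextX$ inside the triangular candidate region of one corner (I will argue for $LowerRightHull$ and invoke symmetry). Under the uniform-independence model, the $\approx n/8$ candidates sitting in the triangle are uniform, and $NextX$ is the order statistic minimizing the $x$-coordinate. Standard order-statistic facts over a triangular domain give $E[a]$ and $E[b]$ as explicit positive constants (depending only on the triangle's shape, not on $n$). Plugging these into $(2a+b)/(3a+b)$ and using the bound $3/4 \le \alpha < 1$ from Lemma~\ref{lemma:candidate2}, one obtains $E[\alpha_1] = \bar\alpha_1 < 1$, a constant independent of $n$.

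For the second assertion, I would chain this one-step bound across all iterations to recover a geometric series. Lemma~\ref{lemma:candidate2} already shows $\alpha$ is monotonically decreasing as $NextX$ marches toward $X_{\max}$, so at every later iteration $j$ we have $\alpha_j \le \alpha_1$ almost surely, and in particular $E[\alpha_j] \le \bar\alpha_1$. Combined with the product expression from the preceding lemma, the expected cost in one corner is
\begin{equation*}
\frac{n}{8}\sum_{i=1}^{n/8}\prod_{j=1}^{i}\alpha_j \;\le\; \frac{n}{8}\sum_{i=1}^{\infty} \bar\alpha_1^{\,i} \;=\; \frac{n}{8}\cdot\frac{\bar\alpha_1}{1-\bar\alpha_1},
\end{equation*}
which is linear in $n$. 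Multiplying by $4$ for the four corners and adding the initial $O(n)$ scan gives the claimed $O(n)$ expected running time bounded by the sum of the geometric series in $\bar\alpha_1$.

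The main obstacle will be rigorously justifying the factorization $E\bigl[\prod_{j=1}^i \alpha_j\bigr] \le \prod_{j=1}^i E[\alpha_j]$, because the $\alpha_j$'s are \emph{not} independent: the residual candidate region at iteration $j+1$ depends on every previously chosen $NextX$. The cleanest route is a stepwise conditioning argument: conditional on the history through iteration $j$, the surviving candidates are still uniform on the residual triangular region (uniformity is preserved under conditioning on order statistics and on being above a chord). Hence $E[\alpha_{j+1}\mid \text{history}]\le \bar\alpha_1$ by the same geometric argument applied to the new region, and iterating the tower property yields the product bound. A secondary nuisance is the tail of the sum — when the candidate list shrinks to $O(1)$, the $\alpha_j$ formula can degenerate — but those terms contribute only $O(1)$ in expectation and can be absorbed into the constant, so they do not affect the $O(n)$ conclusion.
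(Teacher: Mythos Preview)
Your route diverges from the paper's in both halves, and the first half contains a real gap. You treat $NextX$ as the minimum-$x$ order statistic among the $\approx n/8$ uniform candidates and assert that ``standard order-statistic facts over a triangular domain give $E[a]$ and $E[b]$ as explicit positive constants (depending only on the triangle's shape, not on $n$).'' That is false: the leftmost of $m$ uniform points in a right triangle whose acute vertex sits on the left has expected $x$-distance from that vertex of order $m^{-1/2}$, so as $n$ grows $NextX$ concentrates at $Y_{\min}$, the parameters $a,b$ are \emph{not} $n$-free constants, and the resulting $\alpha_1=(2a+b)/(3a+b)$ drifts toward $1$. Your geometric-series bound $\sum \bar\alpha_1^{\,i}$ then has a ratio that is not uniformly bounded away from $1$, and the argument as written does not close.

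The paper does something much cruder and does \emph{not} invoke the minimum order statistic at all. It simply locates the ``expected position'' of $NextX$ by the vertical line that bisects the triangle's area---solving $(L-c)^2=(2L-c)c$ to get $c=L/(2+\sqrt{2})\approx 0.3L$---then plugs that single deterministic location into $(2a+b)/(3a+b)$ to obtain $\alpha_1\approx 0.79$, and bounds the total cost by $n/(1-0.8)+n=6n$. In effect it treats $NextX$ as one uniform draw (area bisection is its median) rather than as the extreme of $n/8$ draws, and it dispatches the dependence among successive $\alpha_j$'s with a one-line ``after drawing a large enough (constant) number of points \dots\ we can assume the expected value is reached'' rather than anything like your tower-property conditioning. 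So your framework for the second assertion is in fact more careful than the paper's; the difficulty lies solely in how you propose to pin down the constant in the first assertion.
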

\begin{proof}
In order to choose $NextX$, we need to draw a point from the uniform distribution
specified by the triangle in Figure~\ref{fig:CandidateArea2}. The three corners of the
triangle have these coordinates:$(Y_{min}.x, Y_{min}.y)$, $(X_{max}.x, Y_{min}.y)$,
\\ $(X_{max}.x, X_{max}.y)$. We are interested in finding the expected position of $NextX$
on $x-$axis. Since the distribution is uniform and we are interested in expected $NextX.x$,
we need to find the point on $x-$axis, such that if we split the triangle using a vertical
line, candidate areas \emph{inside the triangle} on both sides of the vertical line are equal.
We assume the perpendicular
sides of the triangle have equal expected length. One is equal to $L1 = X_{max}.x - Y_{min}.x$, and the
other equal to $L2 = X_{max}.y - Y_{min}.y$. $L1$ and $L2$ are two random variables. $E(L1/L2)$ depends
on the range of values of $x$ and $y$ coordinates in the point set. It is usually the case that these
coordinates are either in the same range or we can perform normalization and make expected values
of $L1$ and $L2$ both equal to a value $L$. Thus, without loss of generality we calculate $(L-c)^2/2$
as the triangle area on the left side of $NextX$. We also compute $(2L-c)\times c/2$, the area
inside the triangle on the right side of $NextX$. Therefore, we need to find the value of $c$ in terms of $L$ in the following equation: $$ L^2 + c^2 - 2Lc = 2Lc - c^2$$.

We get $c = L/(2+\sqrt{2})= L/3.4 \approx 0.3L$, by solving the above equation. After drawing a large enough
(constant) number of points from the distribution, we can assume the expected value is reached in any instance
of the problem. If we rewrite $\alpha = (2a+b)/(3a+b)$ that we computed earlier in the proof of lemma~\ref{lemma:candidate2}, in terms of $L$ and $c$, and replace $c = 0.3L$, we get $a \approx 0.35L$, $b \approx 0.65L$ and $\alpha = 0.79$. Therefore, we can bound expected running time by $(1/(1-0.8)) \times n = 5n$. We need to do an initial scanning of the list in the first candidate elimination and read $n$
points. Therefore, we compute $6n$ as an upper bound for the expected number of points read during
the execution.
\end{proof}

Regardless of the exact running time, by proving Theorem~\ref{theo:linearProof}, we have shown the expected running
time of the algorithm is $O(n)$. In the next section, in our experiments we use counters for the number
of points read until we find the convex hull for each experiment. In all of our experiments,
we read almost $4n$ points during execution. This emphasises, the importance and reliability of our theoretical
analysis for computing the expected running time in this section.

\begin{mytheo}
Expected running time is linear if $P$ follows a \emph{Normal} distribution.
\end{mytheo}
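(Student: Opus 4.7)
The plan is to mirror the structure of Theorem~\ref{theo:linearProof} and show that each candidate-elimination iteration still shrinks the candidate set by a factor bounded away from $1$, even when the underlying density is Gaussian rather than uniform. First I would exploit affine invariance: the only geometric predicates the algorithm uses are of the form ``on which side of line $\ell$ does point $p$ lie,'' and these are preserved by affine transformations; since affine images of Normals are Normal, we may assume without loss of generality that $P$ consists of $n$ i.i.d.\ samples from the standard bivariate Normal with independent coordinates. Then I would isolate one corner (say lower right, as in Lemma~\ref{lemma:candidate2}) and argue by symmetry for the other three.

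Next, I would condition on the four sample extremes $X_{\min}, X_{\max}, Y_{\min}, Y_{\max}$. Given these, the remaining $n-4$ points are i.i.d.\ from the Normal distribution truncated to the bounding box. The elimination ratio in a given iteration is now
\[
\alpha \;=\; \frac{\Pr[\text{point in new candidate triangle } T_{\text{new}}]}{\Pr[\text{point in previous candidate triangle } T_{\text{old}}]} \;=\; \frac{\int_{T_{\text{new}}}\phi(x)\phi(y)\,dx\,dy}{\int_{T_{\text{old}}}\phi(x)\phi(y)\,dx\,dy},
\]
rather than the pure area ratio used in the uniform case. The key observation is that the Normal density is log-concave and strictly positive, so truncation to any convex region yields another log-concave probability measure, and Grünbaum-type inequalities for log-concave measures guarantee that any ``centroidal'' sub-triangle of bounded relative area carries a bounded fraction of the mass. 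Combined with the same triangular geometry used in the proof of Lemma~\ref{lemma:candidate2} to locate $NextX$, this yields a constant $\bar\alpha < 1$ and hence a geometric-series bound $\sum_{i}\bar\alpha^{i}$ on the expected number of points accessed per corner, just as before.

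The main obstacle will be making the bound on $\alpha$ \emph{uniform} across iterations. As the algorithm proceeds, the candidate triangles drift toward the sample extremes, i.e.\ into the tails of the Normal, where $\phi(x)\phi(y)$ decays like $e^{-(x^{2}+y^{2})/2}$; the density is no longer approximately flat on the triangle, and a naive ``ratio of areas equals ratio of probabilities'' argument breaks down. To handle this I would perform a local change of variables rescaling each candidate triangle to unit size before applying the log-concavity bound, and separately use the product structure $\phi(x)\phi(y)$ to reduce the integrals to one-dimensional truncated-Gaussian quantities whose moments and quantiles are explicit. Since the bounding box has diameter $O(\sqrt{\log n})$ but the shrinking candidate triangles have much smaller diameter after a constant number of steps, the density variation \emph{within} a single triangle remains bounded by a universal constant, ensuring that $\bar\alpha$ does not drift to $1$ as iterations proceed.

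Assembling these pieces, the total expected number of points accessed across the four corners is bounded by a constant times $n$, plus the $O(n)$ cost of the initial extreme computation and initial candidate elimination, proving linear expected running time under the Normal model.
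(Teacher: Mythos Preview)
Your proposal is plausible and, if carried out, would be far more rigorous than what the paper does, but it takes a genuinely different and much heavier route. The paper's proof is a three-line comparison argument: it simply observes that a Normal distribution places more mass near its mean and less near the extremes, so in every candidate-elimination step the \emph{eliminated} triangle (which is the more central of the two regions in Figure~\ref{fig:CandidateArea2}) carries at least as large a fraction of the probability mass as it does under the uniform distribution. Hence every $\alpha_i$ in the Normal case is dominated by the corresponding $\alpha_i$ in the uniform case, and the uniform bound from Theorem~\ref{theo:linearProof} applies verbatim as an upper bound.

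By contrast, you rebuild the geometric-series machinery from scratch for the Gaussian: affine normalization, log-concave Gr\"unbaum-type bounds on truncated-Normal mass ratios, and a delicate tail analysis to keep $\bar\alpha$ uniformly below $1$ as the candidate triangles drift into the tails. This buys you something the paper does not actually supply---namely a bound that does not silently rely on ``obvious'' monotonicity of mass between the eliminated and surviving regions---and your framework would extend to any log-concave product density, not just the Normal. The cost is substantial technical overhead (the conditioning on the four extremes, the rescaling argument, the $O(\sqrt{\log n})$ box-diameter control) that the paper bypasses entirely with its domination heuristic. If you want to match the paper, drop the log-concavity apparatus and instead argue directly that each elimination ratio under the Normal is no larger than under the uniform, then invoke Theorem~\ref{theo:linearProof}.
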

\begin{proof}
We have already done the proof for \emph{Uniform} distribution. We know Normal
distribution is more centered around its mean and further from its boundaries.
It is obvious that this results in more probability mass in eliminated areas
in all of the proofs regarding the expected running time analysis. We can say the
expected running time when $P$ is Uniform is an upper bound for the expected
running time when $P$ is Normal.
\end{proof}

\subsection{ConvexHull Running Time: Experimental Analysis}
\label{sec:Convexpn}

We performed $6$ experiments for different number of points in $P$. The number
of points grows exponentially. In all cases, we generate the point set randomly
from uniform distribution. We use $ClassicAlg$ for the classic algorithm and
$NewAlg$ for our new algorithm based on candidate elimination. We use $QuickSort$ with expected $O(nlog(n))$ time
for sorting in the implementation of classic algorithm which is typically one of the most efficient in practice. In all cases, except
for $n = 10$, our running time is almost $4n$. In the case of $n=10$, we perform
$68$ point reads which is more than $40$. Although the difference is negligible, we relate
the additional cost paid to the overhead of finding four quarters of the convex hull separately.
There exists a negligible amount of overhead because $X_{min}$, $X_{max}$, $Y_{min}$ and $Y_{max}$
belong to candidate sets in more than $1$ quarters.

\begin{center}
\scriptsize
     \begin{tabular}{| l | l | l | l | l | l | l |}
     \hline
     \textbf{$\# Points$} & $10$ & $10^2$ & $10^3$ & $10^4$ & $10^5$ & $10^6$ \\ \hline
     \textbf{$ClassicAlg.$} & $145$ & $2367$ & $32425$ & $474853$ & $14462151$ & ? \\ \hline
     \textbf{$NewAlg.$}& $68$ & $424$ & $4237$ & $39854$ & $398406$ & $3879651$ \\ \hline
     \end{tabular}
 \end{center}

The remarkable results we observe in the above table are: 1) Linear number of
point reads compared to the input size for our new algorithm; 2) Finding $2-$D convex hull
of up to $10^6$ points while the classic algorithm fails to do so. We also notice we pay
a lot less cost in order to find the convex hull of $10^6$ points than the classic algorithm
pays to find the convex hull of $10^5$ points.

\section{Classification Algorithm}
\label{sec:classificationAlgo}
\begin{figure}[t]
  \centering
  \includegraphics[scale=0.4, trim = 20mm 0mm 0mm 0mm]{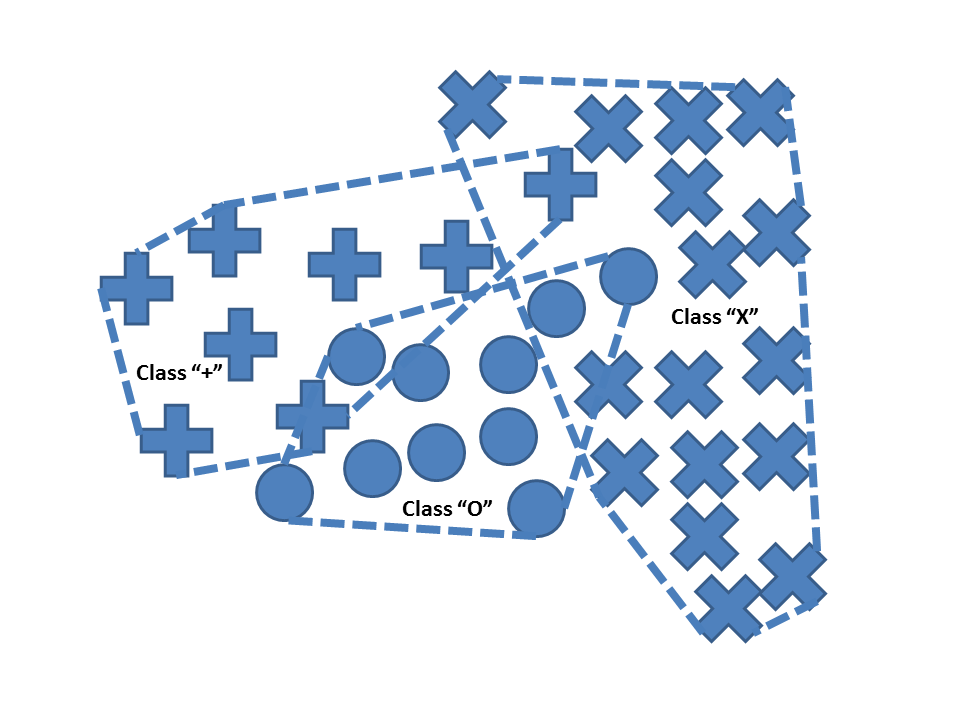}
  \caption{Convex Hull used as a boundary for each class. Convex Hulls may partially overlap.}
  \label{fig:classificationexp1}
\end{figure}
Figure~\ref{fig:classificationexp1}, shows the same distribution of data
in classes as Figure~\ref{fig:classificationexp}. It also shows how
classes are separated using their 2-dimensional (2D) convex hulls.
Any new sample with missing label is checked against these three convex hulls.
It is classified in that class if it is inside the corresponding convex hull.
As shown in the figure, these classes overlap in the areas they cover and misclassification
is always possible. It is also the case that if we try to separate these classes
using other decision boundaries we face the same problem. Since convex hull tightly wraps
around the points from each class, it reduces the chances of misclassification
using its tight boundaries. There can be instances where the point falls outside
all convex hulls. In such cases, we can assign a point to a class using it's
proximity in Euclidean space. In the rest of this paper, we only deal with classification
problems where there are typically $d > 2$ features. In this case, since there are
more than 2 dimensions, for each class we produce 2D convex hulls for every permutation
of 2 features resulting in $\binom{d}{2}$ convex hulls for $d$ features. We define
the notion of \emph{2DAspect} as follows:

\begin{mydef}
A two dimensional data aspect \\
(\emph{2DAspect}) is a structure containing the following
data.
\begin{itemize}
  \item \textbf{ClassLabel} (String): the class this 2DAspect belongs to.
  \item \textbf{$f_1$, $f_2$} (int): indices of a pair of features chosen from the set of all possible pairs.
  \item \textbf{UpperHull}: ordered \emph{list} of points that form the upper hull of data in $f_1$,$f_2$ plane.
  \item \textbf{LowerHull}: ordered \emph{list} of points that form the lower hull of data in $f_1$,$f_2$ plane.
\end{itemize}
\end{mydef}

In order to check whether a point is covered by a 2DAspect, we check if it is below all the lines
on the UpperHull and above all the lines on the LowerHull. Our convex hull based classifier, is composed of
$C \times \binom{d}{2}$ 2DAspect's, while $C$ is the number of classes.
Algorithm~\ref{algo:train}, provides the pseudo-code for training a DataGrinder using 2D convex hulls.

\begin{algorithm}
\begin{algorithmic}[1]
\caption{TrainDataGrinder$(X,Y)$}
\label{algo:train}
\REQUIRE $X_{n \times d}$ data matrix, $Y_{n \times 1}$ corresponding
labels of rows in $X$.
\ENSURE All $C \times \binom{d}{2}$ two dimensional aspects
\STATE $2DAspects = $ empty list
\FOR{\textbf{\emph{each}} class $C_i$ }
\FOR{\textbf{\emph{each}} pair of features $(f_1, f_2)$}
\STATE $P=\pi_{f_1,f_2}(\sigma_{Y=C_i}((X,Y))) $
\STATE $UH(P) = $ upper hull of $P$
\STATE $LH(P) = $ lower hull of $P$
\STATE $2DAspect$ temp = \textbf{new} $2DAspect()$
\STATE $temp.classLabel = C_i$
\STATE $temp.UpperHull = UH(P)$
\STATE $temp.LowerHull = LH(P)$
\STATE $temp.f_1 = f_1$
\STATE $temp.f_2 = f_2$
\STATE $2DAspects.add(temp)$
\ENDFOR
\ENDFOR
\STATE \textbf{return} $2DAspects$
\end{algorithmic}
\end{algorithm}

In Algorithm~\ref{algo:train}, for each class label ($C_i$), and pair of columns (features $f_1$, $f_2$),
we select all rows of $X$ corresponding to $C_i$, then project to columns $f_1$ and $f_2$. Both selection
and projection are standard Relational Algebraic operations and thus we can even implement DataGrinder
inside a database engine. We find upper and lower hulls of the point set, $P$, in the $(f_1,f_2)$ plane.
Having found the convex hull, we create a new $2DAspect$ structure using $C_i$, $f_1$, $f_2$, $UH$ and $LH$.
We repeat the process and construct all $C \times \binom{d}{2}$ 2DAspects.

Testing for a new sample without label is done as follows:

\begin{itemize}
  \item Iterate over all $2DAspects$.
  \item Project the input $x'$ vector to the corresponding $(f_1, f_2)$ for each 2DAspect.
  \item Check if the 2DAspect contains $\pi_{f_1,f_2}(x')$.
  \item Increment the score for the corresponding class label $C_i$.
  \item Find the class $c_{max}$ with the highest score and classify $x'$ to $c_{max}$.
\end{itemize}

Testing is simpler than training and all we need to do is check for all 2DAspects, if they
contain the new data row $x'$, inside their convex hull. Having done this, we keep track of
a \textbf{\emph{count}} for each class, $C_i$, in how many 2DAspects it covers $x'$.
We choose the class with the highest score and assign the appropriate class label according to
DataGrinder.

\begin{figure*}[t]
\centering
\subfigure[$\lambda=1$]{
\includegraphics[width=\figthree]{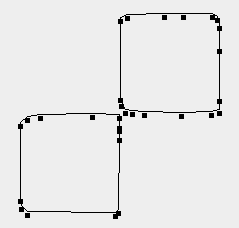}
\label{fig:expr1}
}
\subfigure[$\lambda=2$]{
\includegraphics[width=\figthree]{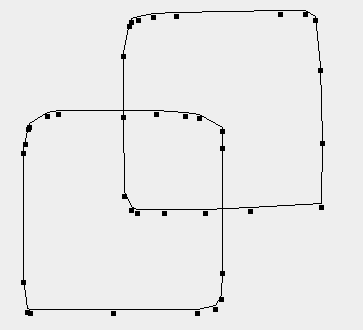}
\label{fig:expr2}
}
\subfigure[$\lambda=3$]{
\includegraphics[width=\figthree]{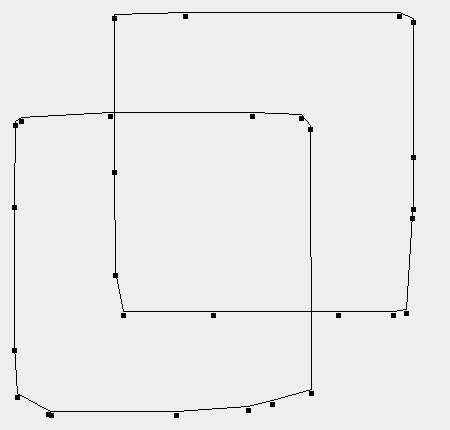}
\label{fig:expr3}
}
\subfigure[$\lambda=4$]{
\includegraphics[width=\figthree]{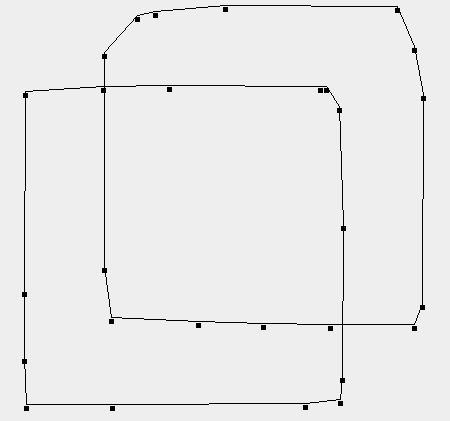}
\label{fig:expr4}
}
\subfigure[$\lambda=5$]{
\includegraphics[width=\figthree]{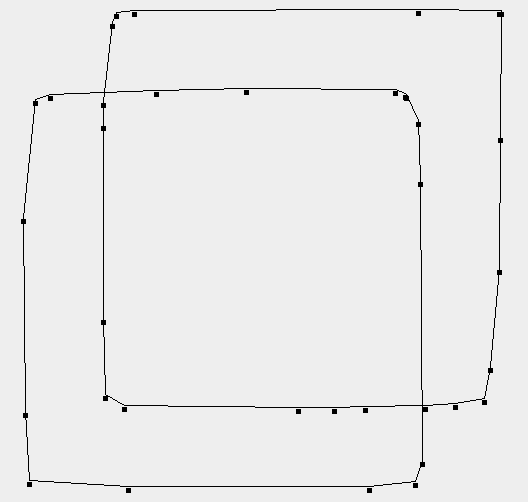}
\label{fig:expr5}
}
\subfigure[$\lambda=10$]{
\includegraphics[width=\figthree]{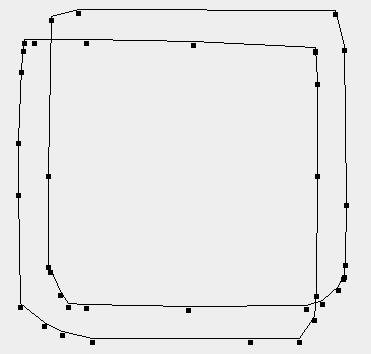}
\label{fig:expr110}
}
\caption{Example: Randomly generated 2D data with different values of $\lambda$ and $C=2$ (binary classification).
Class boundaries are shown after connecting the points on convex hull for each class. While still the two classes
are separable according to their boundaries, it gets harder to distinguish between the classes for larger $\lambda$
due to overlapping areas.}
\label{fig:randomData}
\end{figure*}

\subsection{Parallelization for Training and Testing}
We can achieve parallelization for both training and testing phases
easily by partitioning according to either classes or 2DAspects. This
can be done in a straight forward way following a divide and conquer approach.
For instance we can partition data into different classes or partition according
to indices of $(f_1,f_2)$ combinations. Since this is trivial, we only describe
a simple divide and conquer algorithm for finding the 2D convex hull of a point set
$P$, to conclude this section. It is worth to highlight that in Section~\ref{sec:expectedTime},
we already showed both theoretically and empirically that our convex hull algorithm reads
only $O(n)$ expected number of points during its execution. Parallelization of the same
algorithm using divide and conquer strategy obviously does not increase the running time.
In fact, there may be no reason for parallelization in many scenarios. In cases where we want
to build classifiers on demand for millions of points, it is practical to use parallelization.
Algorithm~\ref{algo:parallelConvexHull} provides the pseudo code.

\begin{algorithm}
\begin{algorithmic}[1]
\caption{DivideConquerConvexHull$(P)$}
\label{algo:parallelConvexHull}
\REQUIRE Point Set $P$
\ENSURE Convex Hull of $P$, $CH(P)$
\STATE Partition $P$ into $k$ partitions $\{P_1...P_k\}$
\STATE $P'$ = empty set of points
\FOR{\textbf{\emph{each}} partition $P_i$}
\STATE $CH(P_i) =$ convex hull of $P_i$
\STATE add all the points in $CH(P_i)$ to $P'$
\ENDFOR
\STATE $CH(P) = CH(P')$
\STATE return $CH(P)$
\end{algorithmic}
\end{algorithm}

We use $CH(P)$ to denote the convex hull of $P$. As described earlier, it is composed
of to halves or four quarters each of which is an ordered set of points by $x$, ($f_1$), coordinate.
The idea is simple, first we partition $P$, until the size of each $P_i$ is small enough.
Typical running times can be estimated according to a simple cost-based analysis, and
computing power/trafic available. We find the convex hull of each $P_i$, resulting in
only a few remaining points on $CH(P_i)$, typically constant. Having done this, we merge all
$CH(P_i)$'s. It is guaranteed that we end up with a \textbf{\emph{super set}} of the points required for the
correct answer of $CH(P)$. We find the convex hull of $P'$ trivially in a final step.

\begin{mytheo}
\label{theo:convexdivide}
Algorithm~\ref{algo:parallelConvexHull}, correctly finds the convex hull of $P$, using
divide and conquer.
\end{mytheo}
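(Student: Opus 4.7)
The plan is to show that the set $P'$ constructed in lines 2--6 has the same convex hull (as a region) as the original set $P$, so that the final call $CH(P') = CH(P)$ on line 7 produces the correct answer. Throughout, I will freely identify a convex hull with the convex polygonal region it bounds; the vertex set returned by the algorithm is then uniquely determined by that region (the vertices are exactly its extreme points), so equality of regions implies equality of the outputs.

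First I would establish the easy inclusion $CH(P') \subseteq CH(P)$. By construction, every point added to $P'$ comes from some $CH(P_i)$, whose vertices are elements of $P_i \subseteq P$. Hence $P' \subseteq P$, and since the convex hull is monotone with respect to set inclusion, $CH(P') \subseteq CH(P)$.

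Next I would establish the reverse inclusion $CH(P) \subseteq CH(P')$. It suffices to show every original point $p \in P$ lies in $CH(P')$, because $CH(P)$ is the smallest convex set containing $P$. Given $p \in P$, the partition step places $p$ in some $P_i$, so $p \in CH(P_i)$ as a region; by the definition of convex hull used in Section~\ref{sec:probdef}, $p$ is a positive linear combination of the vertices of $CH(P_i)$. Those vertices are precisely the points added to $P'$ in line 5, so $p$ is a convex combination of points of $P'$, hence $p \in CH(P')$. Combining the two inclusions gives $CH(P) = CH(P')$, which is what the algorithm returns.

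The only subtle point, and the one I would be careful to spell out, is the passage from ``same region'' to ``same list of hull vertices,'' since the algorithm's concrete output is the ordered vertex list (an upper hull and a lower hull) rather than an abstract region. This is handled by noting that the vertices of a convex polygon are exactly its extreme points, a property of the region alone, so any correct subroutine for computing $CH(P')$ on the last line will emit the same ordered sequence as a correct subroutine applied directly to $P$. I do not expect any genuine obstacle beyond making this identification explicit; degenerate cases (collinear points, duplicate points, or a partition of size less than three) are handled exactly as in Section~\ref{sec:classicAlgo}, since the argument above never required $P_i$ to be in general position.
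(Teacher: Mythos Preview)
Your proof is correct and considerably more careful than the paper's, which consists of a single line: ``Proof is already explained since $CH(P) \subseteq \bigcup_{i=1}^{k}CH(P_i)$.'' The paper is using $CH(\cdot)$ here as a \emph{vertex set} and simply asserting that every extreme point of $P$ must already be an extreme point of whichever block $P_i$ contains it; once $P'$ is known to contain all of $CH(P)$'s vertices, the final call on line~7 is trivially correct.

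Your route is different in emphasis: rather than arguing directly about extreme points, you treat $CH(\cdot)$ as a region and prove $CH(P') = CH(P)$ by two inclusions, the nontrivial one coming from the fact that each $p\in P$ is a convex combination of the vertices of $CH(P_i)$, all of which land in $P'$. This buys you a cleaner handling of the ``same region implies same ordered vertex list'' issue, which the paper never mentions, and it avoids the (easy but unstated) lemma that an extreme point of a set remains extreme in any subset containing it. The paper's argument is shorter; yours is more self-contained and explicit about what equality of convex hulls actually means for the algorithm's output.
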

\begin{proof}
Proof is already explained since $$CH(P) \subseteq \bigcup_{i=1}^{k}CH(P_i)$$.
\end{proof}

\begin{figure*}[t]
\subfigure[Binary classification and changing $\lambda$]{
\includegraphics[scale=0.65]{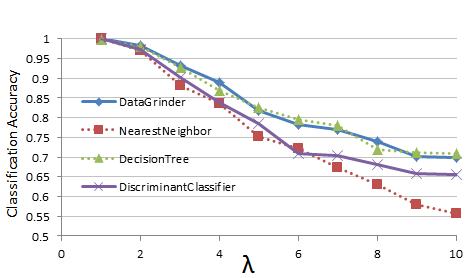}
\label{fig:exLambda}
}
\subfigure[Multi-class classification: fixed $\lambda$ and changing $C$]{
\includegraphics[scale=0.7]{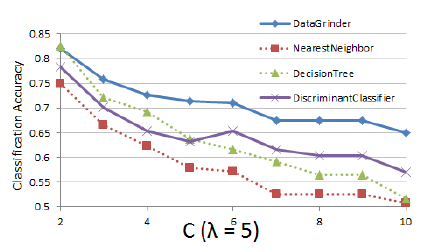}
\label{fig:exC}
}
\caption{Classification accuracy on randomly generated datasets using $C_i + \lambda \times Uniform(0,1)$.}
\end{figure*}

\section{Experimental Analysis}
\label{sec:expn}
We have already shown how our convex hull algorithm achieves expected $O(n)$ point
reads. In this section, we already report our results regarding accuracy in different cases.
First, we propose a random class generation approach, through which we can control the difficulty
of the classification problem instance. We generate data only using the Uniform distribution. It is
known that we can convert other distributions to Uniform as well before classification, using
\emph{Normalization}~\cite{BishopML}.
Here, our focus is mainly on designing DataGrinder and efficient algorithms.
We report our raw results using only the algorithms described and avoid any pre/post processing
to leave more room for the future work, and study the \emph{key} factors involved
in classification accuracy of DataGrinder, in its standard and straight forward case.
We will show shortly, how DataGrinder (DGR) achieves high accuracy even in its simplest form, as
described in this paper. This increases our hopes for designing highly scalable Data Mining and
Machine Learning algorithms, in the Database community. Our data generation aapproach works as follows.
Feature values of class $C$ (last class label), are
generated as: $C + \lambda \times Uniform(0,1)$. This results in producing uniformly distributed
random values for all features of class $C_i$, in the range $[C_i, C_i + \lambda]$. For simplicity, we assume all
class labels are integer, and all features are generated from the same distribution.
Suppose there are only two classes, Figure~\ref{fig:randomData}, shows how
the classification gets more complicated with increasing $\lambda$. There are two class labels $0$ and $1$.
In the case of $\lambda = 1$, the classes are linearly separable from each other. Therefore, any algorithm must
be able to achieve $100\%$ classification accuracy, if both training and testing datasets are generated
using the same $\lambda$ and $C$ parameters. As we increase $\lambda$, the two classes overlap in larger regions
and thus the classification gets more complicated. We have shown our decision boundaries using convex hulls
and points on them, for different values of $\lambda$ (~\ref{fig:randomData}).
It is also commonly known that when there are
more classes (i.e. multi-class classification), the classification is more challenging. This is because
we need more decision boundaries, and there is more probability for overlapping areas as well as fewer
training samples for each class, compared to the number of samples from other classes. Here, we only
show examples of 2D convex hulls for binary classification. As described earlier there are $2 \times \binom{d}{2}$
such \textbf{\emph{"2DAspects"}}. In each case, we generate data ($X$) with 5 dimensions $f_1,...,f_5$. Figure~\ref{fig:exLambda},
compares DataGrinder classification accuracy, to $3$ other well-known methods for $C=2$, and changing $\lambda$. For all the other three algorithms, we use standard
MATLAB functions and default parameter setting, since DataGrinder is fully non-Parametric.
DecisinTree, is a text-book classifier,
that achieves optimization using partitioning, information gain and obtaining a sequence
of comparisons that leads to a class label with high accuracy. NearestNeighbor method searches the training
dataset for a new testing instance, and assigns class label according to the closest point in
the Euclidean space. DiscriminantClassifier, finds decision boundaries using $L_1$ and $L_2$
Regularization~\cite{DiscriminantClassifier:07}, in order to avoid overfitting to the training data.
We train and test using
$1000$ samples for training and testing each. Both Decision Tree and DiscriminantClassifier
may need heavy training time if the dataset is large due to their optimization problems. Typically, at least
several \emph{Sequential Scans} of the dataset is the minimum required.
NearestNeighbor is the most efficient, if we use space partitioning spacial indices in testing.
However, the results show its accuracy is outperformed by all methods
almost in all cases. Using our randomly generated data for binary classification, we find that
Decision Tree and DataGrinder achieve the highest accuracy. We believe this is due to the fact
that they both partition the space into regions rather than just using lines or hyperplanes as
decision boundaries and our classification scenario is such that the DiscriminantClassifier fails.
We fix $\lambda=5$ and repeat for multi-class classification while changing $C$. In this case, we notice
all classification algorithms fail compared with DataGrinder, due to the considerable gap in accuracy.
Given DataGrinder's special scalability features for BigData, this is a bonus that DataGrinder also
achieves outstanding accuracy in this experiment compared with commercial classification algorithms in MATLAB 2012.

\begin{figure}[t]
  \centering
  \includegraphics[scale=0.8, trim = 20mm 0mm 0mm 0mm]{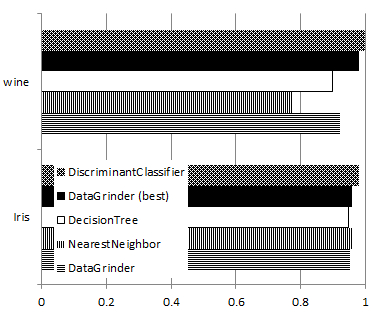}
  \caption{Classification accuracy on Wine and Iris datasets.}
  \label{fig:classificationexpIrisWine}
\end{figure}

\begin{figure}[t]
  \centering
\includegraphics[scale=0.5, trim = 30mm 0mm 20mm 0mm]{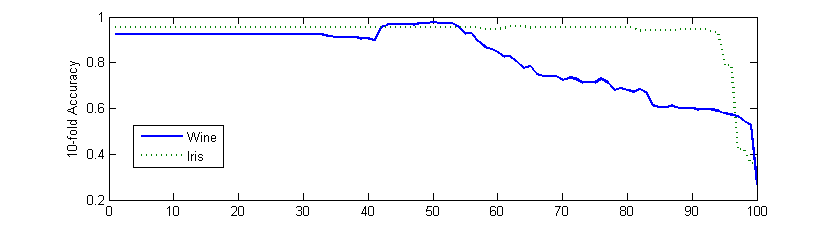}
\caption{Using 100 "filters", for $\theta = 0,0.01,0.02...1$ to find the best classifier.}
\label{fig:RatioExp}
\end{figure}

\subsection{Existing Classification Datasets}
We use two standard datasets also used as examples in Machine Learning textbooks for
the classification problem, Iris and Wine. We obtain these datasets from the UCI data mining repository~\footnote{http://archive.ics.uci.edu/ml/datasets.html}.
Both datasets have less than $1000$ samples
and $3$ classes. We use 10-fold cross validation for training and testing, meaning
we divide the dataset into $10$ partitions, and use the average of $10$ experiments.
In each experiment, we use $9$ partitions for training and $1$ for testing. All algorithms
reach acceptable accuracy on Iris dataset $>90\%$, and close to $1$ (Figure~\ref{fig:classificationexpIrisWine}).
In the case of Wine dataset, DiscriminantClassifier performs slightly superior compared to DataGrinder and DecisionTree. NearestNeighbor
method is significantly outperformed by all the other algorithms. Since DiscriminantClassifier uses many
parameters to achieve this, we also decide to add only $1$ hyper-parameter namely \textbf{\emph{Filtering Ratio ($0\le \theta<1$)}} to DataGrinder. In order to do this, we add an additional variable to each 2DAspect, \emph{Classification Accuracy}. It refers to the number of training samples that correctly fall inside a 2DAspect (i.e. 2DAspect and
data labels match), over the total number of all samples. Any 2DAspects that largely overlap with other classes
resulting in classification accuracy less than $\theta$, are removed from DataGrinder. We vary $\theta$ using
a $0.01$ step size from $0$ to $1$ over the training dataset and record the best testing accuracy. We also
show in Figure~\ref{fig:classificationexpIrisWine}, the best DataGrinder accuracy after filtering using
a solid bar. As it is notable, DataGrinder accuracy increases after filtering, resulting in less classification
errors. This also adds another dimension to our future research in order to target adding few meaningful
parameters or hyper-parameter to the model that increase accuracy. The remarkable fact to highlight about
the filtering technique presented is that we can achieve higher accuracy using tuning techniques and
this leaves the door open for future research on DataGrinder. Figure~\ref{fig:RatioExp}, shows how
DataGrinder accuracy changes on these datasets with varying $\theta$. For $\theta=0$, i.e. Raw DataGrinder, no
2DAspects are filtered. When $\theta=1$, all 2DAspects are filtered and all samples are assigned to the default class $0$. In both cases we get the best accuracy around $\theta = 0.5$. \textbf{\emph{This is logical, because any features whose
classification rate is more than misclassification rate can be useful for discriminating between classes. When
a large enough number of such features are combined, we can achieve high overall accuracy.}}

\section{Related work}
\label{sec:related}
In this section, we review the recent works in literature that
discuss scalable data mining algorithms and frameworks similar
to DataGrinder, in motivation and technical contribution.

In~\cite{SVMInd:11}, authors propose an "Exact Indexing" approach
for Support Vector Machines. They propose indexing strategies
in the Kernel space, iKernel, that is used for exact top-k query processing
when SVM is used for ranking. Given a SVM model, authors use properties
of the Kernel space such as \emph{ranking instability} and \emph{ordering stability}.
They provide an excellent background of support vector machines, and their relevance to
databases, top-k query processing and ranking. They only focus on prediction (i.e. testing),
and do not aim at designing parallel SVM algorithms. DataGrinder, provides a highly scalable
algorithmic framework for both training, model updating and testing that achieves high accuracy.
We might be able to focus on future work leveraging convex hulls for constructing kernels as well
as ranking. Although this requires leveraging more geometric properties of the data, in order
to be able to achieve accuracy as high as Support Vector Machines. Support
Vector Machine is a well researched problem with a complex structure. In contrary, DataGrinder
aims at building simpler discrete models with high accuracy and our initial experimental
results are promising. SVM also has applications in bioinformatics, where there are
thousands of features and we need to improve DataGrinder in order to be able to deal with
these applications. Biological datasets are typically more complex. Regardless of the model
structure, they focus on ranking and top-k query processing while we focus on convex hulls
and classification. ArrayStore~\cite{ArrayStore:11}, is a storage manager for complex array processing. Authors
process datasets as big as $80GB$, using parallel data mining algorithms. They provide a multi-dimensional
array model, suitable for our classification scenario. They also discuss data access issues. Our
\emph{Select}-\emph{Project}-\emph{ConvexHull} series of operations completely fits within their
storage framework. Thus, we do not worry about scalability of DataGrinder at all. Rather than focus
on storage, in this paper we discuss a \emph{new discrete classification algorithm}, that can work on the
top of ArrayStore. Authors already discuss two types of clustering algorithms, but they did not provide
any examples on the classic classification problem. We provide a divide and conquer algorithm that makes
DataGrinder compatible with ArrayStore. ERACER~\cite{ERACER:10}, provides an iterative statistical framework for filling in missing data, as well
as data cleaning and fixing corrupted values using conventional statistical methods. DataGrinder can solve their problem in a special case. Extensions of DataGrinder can also
solve the same exact problem. We use Computational Geometry, and theoretical analysis for a $O(n)$ expected
running time algorithm while maintaining accuracy. DataGrinder can as well fit inside a DBMS engine using
\emph{Select}-\emph{Project}-\emph{ConvexHull}. We can implement ConvexHull as an operation using \emph{Table Functions}. DataGrinder is fully non-Parametric, meaning that it is easy to use, and needs no parameter tuning.
DataGrinder is completely discrete and we can also count on divide and conquer solutions for intense scalability.
DataGrinder is easy to implement, thus suitable for the industry. DataGrinder achieves high accuracy in classification. We also show with experiments how we can improve accuracy by \emph{Filtering($\theta$)}. All in all, we find DataGrinder a more suitable solution for the database community, due to its strong and fundamental
theoretical contributions. Spanners~\cite{Spanners:13}, is an interesting theoretical contribution, and a formal framework
for information extraction. We believe DataGrinder has a similar flavour in its contribution to Spanners.
We also aim at designing operations for processing multidimensional data and knowledge discovery. Spanners is focused
on Information Extraction and using Regular Expressions for Text Mining using predefined operations. Several other
previous works have also tried to achieve the same goal such as~\cite{TopRecs:11}. Another interesting direction to achieve parallel statistical and data mining algorithms is through Sampling~\cite{DCInference:12}. In this approach, we make \emph{BigData} assumption and use parallelization
for processing. We build many small models and using statistical inference, we combine these models to
guarantee reliability and accuracy. The size of input data and distribution(s) of data  are examples
of key parameters we need to take into account. Naturally, we need to focus on how to sample and
pay attention to things such as the number of samples, the size of each sample as well as
how to effectively combine the models built using different samplings of the data.
This can be done for Big Data, regardless of the data mining task discussed. Examples of such methods include~\cite{ClusterForest:13, DDFactors:11}. Rule-based classifiers are other examples of discrete classification algorithms, discussed
in the data mining literature~\cite{RuleClass1, RuleClass2, RuleClass3}. They use frequent
patterns and association rules mining in order to find rules with high support and confidence.
They typically achieve reliable accuracy. They need a rather costly parameter tuning step to construct the best
classifier. They need the exact solution of a NP-hard theoretical problem compared to $O(n)$ expected
running time of DataGrinder. There have been some attempts for parallel frequent pattern mining
algorithms which is \emph{outside the context of this paper}.

Convex Hull problem has a long history in Computational Geometry~\cite{ComputationalGeometryBook}.
It is significantly important, because many other important problems in Computational Geometry
can be reduced to this problem. Many efforts have been devoted to improving the \emph{worst case}
running time and output sensitive algorithms.
We find average case analysis more suitable to the database community, due to its
similarity to cost-based query optimization. In~\cite{Convex:81}, there is a proposal for
expected $O(n)$ algorithms along with theoretical analysis to prove its possibility. In this paper,
we provide an \emph{algorithm with pseudo code} and calculate a \emph{exact costant}, to serve
as an upper-bound for the expected running time. Our experimental evaluation backs up all of our
arguments, regardless of the running time and programming languages used.

\section{Conclusions and Future Work}
\label{sec:concl}
In this paper, we revisited the important problem of finding 2D convex hulls. We
propose an algorithm based on a well-known historical algorithm, with $O(n)$ expected running time.
We propose a simpler and shorter proof compared to the previous work, and also calculate a
constant that serves as an upper-bound for the expected linear running time. We conduct experiments
to back up all of our arguments. We perform several experiments and show DataGrinder is comparable
to the most reliable commercial classification packages in MATLAB and outperforms many, while maintaining
its extreme provable scalability. We show how to achieve several
levels of parallelization, while keeping the correctness of our classification algorithm. We intend to focus
on more detailed Geometrical study of the problem, in order to partition the data more accurately. Specially,
remove sparse areas. We also intend to test for more classification scenarios as well as adding meaningful parameters and hyper-parameters to DataGrinder. We would also like to take DataGrinder to the cloud for classification of enormous datasets.

\scriptsize{

}
\end{document}